\newtheorem{theorem}{Theorem}[section]
\newtheorem{lemma}[theorem]{Lemma}
\newtheorem{definition}{Definition}
\let\oldnl\nl
\newcommand{\nonl}{\renewcommand{\nl}{\let\nl\oldnl}}
\begin{document}

\title{Diverse Yet Efficient Retrieval using Hash Functions}

\numberofauthors{3}
\author{
\alignauthor
Vidyadhar Rao \\ 
       \affaddr{IIIT Hyderabad, India}\\
       \email{first.last@research.iiit.ac.in} \\
\and
\alignauthor
Prateek Jain \\ 
       \affaddr{Microsoft Research India}\\
       \email{prajain@microsoft.com} \\
\alignauthor 
C.V Jawahar \\ 
       \affaddr{IIIT Hyderabad, India}\\
       \email{jawahar@iiit.ac.in}\\
}

\maketitle
\begin{abstract}
Typical retrieval systems have three requirements: a) Accurate retrieval i.e., the method 
should have high precision, b) Diverse retrieval, i.e., the obtained set of points should be 
diverse, c) Retrieval time should be small. However, most of the existing methods address only 
one or two of the above mentioned requirements. In this work, we present a method based on 
{\em randomized} locality sensitive hashing which tries to address all of the above requirements simultaneously. 
While earlier hashing approaches considered approximate retrieval to be acceptable only for the 
sake of efficiency, we argue that one can further exploit approximate retrieval to provide 
impressive trade-offs between accuracy and diversity. We extend our method to the problem of 
multi-label prediction, where the goal is to output a diverse and accurate set of labels for a 
given document in real-time. Moreover, we introduce a new notion to simultaneously evaluate 
a method's performance for both the precision and diversity measures. Finally, we present 
empirical results on several different retrieval tasks and show that our method 
retrieves diverse and accurate images/labels while ensuring $100x$-speed-up over the 
existing diverse retrieval approaches.
\end{abstract}

\category{H.3.3}{Information Search and Retrieval}[Selection Process]
\category{G.1.6}{Optimization}[Quadratic and Integer programming]
\category{G.3}{Probability and statistics}[Probabilistic algorithms]

\terms{Algorithms, Retrieval Performance}

\keywords{Randomness, Approximation, Hash Functions, Diversity}

\section{Introduction}

Nearest neighbor (NN) retrieval is a critical sub-routine for machine learning, databases, signal processing, and a variety of other disciplines. Basically, we have a database of points, and an input query, the goal is to return the nearest point(s) to the query using some similarity metric. As a na\"{\i}ve linear scan of the database is infeasible in practice, most of the research for NN retrieval has focused on making the retrieval efficient with either novel index structures~\cite{cayton2008learning, yu2011exact} or by approximating the distance computations~\cite{basri2011approximate, jain2010hashing}. That is, the goal of these methods is: a) accurate retrieval, b) fast retrieval. 

However in practice, NN retrieval methods~\cite{ference2013spatial, kucuktunc2013lambda} are expected to meet one more criteria: diversity of retrieved data points. That is, it is typically desirable to find data-points that are diverse and cover a larger area of the space while maintaining high accuracy levels. For instance, when a user is looking for \textit{flowers}, a typical NN retrieval system would tend to return all the images of the same flower (say lilly). But, it would be more useful to show a diverse range of images consisting of \textit{sunflowers}, \textit{lillies}, \textit{roses}, etc. In this work, we propose a simple retrieval scheme that addresses all of the above mentioned requirements, i.e., a) accuracy, b) retrieval time, c) diversity. 

Our algorithm is based on the following simple observation: in most of the cases, one needs to trade-off accuracy for diversity. That is, rather than finding the nearest neighbor, we would need to select a point which is a bit farther from the given query but is {\em dissimilar} to the other retrieved points. Hence, we hypothesis that {\em approximate nearest neighbors} can be used as a proxy to ensure that the retrieved points are diverse. While earlier approaches considered approximate retrieval to be acceptable only for the sake of efficiency, we argue that one can further exploit approximate retrieval to provide impressive trade-offs between accuracy and diversity. 

To this end, we propose a Locality Sensitive Hashing (LSH) based algorithm that guarantees approximate nearest neighbor retrieval in sub-linear time retrieval and superior diversity. We show that the effectiveness of our method depends on {\em randomization} in the design of the hash functions. Further, we modify the standard hash functions to take into account the distribution of the data for better performance. In our approach, it is easy to see that we can obtain higher accuracy with poor diversity and higher diversity with poor accuracy. Therefore, similar to precision and recall, there is a need to balance between accuracy and diversity in the retrieval. We keep a balance between accuracy and diversity and try to maximize the harmonic mean of these two criteria. Our method retrieves points that are sampled uniformly at {\em random} to ensure diversity in the retrieval while maintaining reasonable number of relevant ones. Figure~\ref{fig:toyDataset} contrasts our approach with the different retrieval methods. 

\begin{figure*}
        \centering
        \begin{subfigure}[b]{0.3\textwidth}
        \centering
        			\fbox{
                \includegraphics[width=0.75\textwidth]{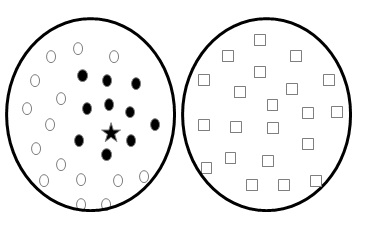}	
                }
                \caption{Accurate, Not diverse}
                \label{fig:nn_app}
        \end{subfigure}
        \begin{subfigure}[b]{0.3\textwidth}
        \centering
        			\fbox{
                \includegraphics[width=0.75\textwidth]{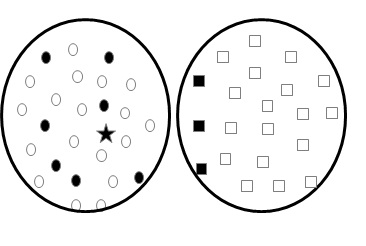}
                	}
                \caption{Not Accurate, diverse}
                \label{fig:greedy_app}
        \end{subfigure}
        \begin{subfigure}[b]{0.3\textwidth}
        \centering
        			\fbox{
                \includegraphics[width=0.75\textwidth]{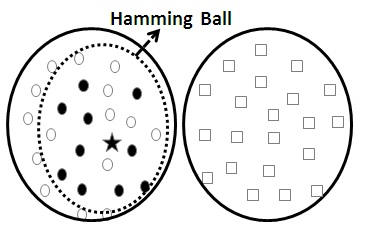}
                	}
                \caption{Accurate, diverse}
                \label{fig:lsh_div_app}
        \end{subfigure}
        \caption{Consider a toy dataset with two classes: class A ($\circ$) and class B ($\square$). We show the query point ($\star$) along with ten points ($\bullet, \blacksquare$) retrieved by various methods. In this case, we consider diversity to be the average pairwise distance between the points. a) A conventional similarity search method (e.g: k-NN) chooses points very close to the query and therefore, shows poor in diversity. b) Greedy methods offer diversity but might make poor choices by retrieving points from the class B. c) Our method finds a large set of approximate nearest neighbors within a hamming ball of a certain radii around the query point and also ensuring the diversity among the points.}
        \label{fig:toyDataset}
\end{figure*}


The main contribution of this paper can be summarized as follows:
\begin{enumerate}
	\item We formally define the diverse retrieval problem and show that in its general form is NP-hard and that also the existing methods are computationally expensive.
	\item While approximate retrieval is acceptable only for sake of efficiency, we argue that one can further exploit approximate retrieval to provide impressive trade-offs between accuracy and diversity.
	\item We propose hash functions that characterizes the locality sensitive hashing to retrieve approximate nearest neighbors in sub-linear time and superior diversity.  
	\item We extend our method to diverse multi-label prediction problem and show that our method is not only orders of magnitude faster than the existing diverse retrieval methods but also produces accurate and diverse set of labels.	
\end{enumerate}

\section{Related Work}

\subsection{Optimizing Relevance and Diversity}

Many of the diversification approaches are centered around an optimization problem that is derived from both relevance and diversity criteria. These methods can be broadly categorized into the following two approaches: (a) {\em Backward selection}: retrieve all the relevant points and then find a subset among them with high diversity, (b) {\em Forward selection}: retrieve points sequentially by combining the relevance and diversity scores with a greedy algorithm~\cite{carbonell1998use, deselaers2009jointly, he2012gender, zhai2003beyond}. Most popular among these methods is MMR optimization~\cite{carbonell1998use} which recursively builds the result set by choosing the next optimal selection given the previous optimal selections. 

Recent works~\cite{sanner2011diverse, wang2010statistical} have shown that natural forms of diversification arise via optimization of rank-based relevance criteria such as average precision and reciprocal rank. It is conjectured that optimizing $n-call@k$ metric correlates more strongly with diverse retrieval. More specifically, it is theoretically shown~\cite{sanner2011diverse} that greedily optimizing expected $1-call@k$ w.r.t a latent subtopic model of binary relevance leads to a diverse retrieval algorithm that shares many features to the MMR optimization. However, the existing greedy approaches that try to solve the related optimization problem are computationally more expensive than the simple NN, rendering them infeasible for large scale retrieval applications.

Complementary to all the above methods, our work recommands diversity in retrieval using randomization and not optimization. In our work, instead of finding exact nearest neighbors to a query, we retrieve approximate nearest neighbors that are diverse. Intuitively, our work parallels with these works~\cite{sanner2011diverse, wang2010statistical}, and generalizes to arbitrary relevance/similairty function. In our findings, we theoretically show that approximate NN retrieval via locality sensitive hashing naturally retrieve points which are diverse.

\subsection{Application to multi-label prediction}

A typical application of multi-label learning is automatic image/video tagging~\cite{carneiro2007supervised, weston2010large}, where the goal is to tag a given image with all the relevant concepts/labels. Other examples of multi-label instance classification include bid phrase recommendation~\cite{agrawal2013multi}, categorization of Wikipedia articles etc. In all cases, the query is typically an instance (\textit{e.g.}, images, text articles) and the goal is to find the most relevant labels (\textit{e.g.,} objects, topics). Moreover, one would like the labels to be diverse. 

For instance, for a given image, we would like to tag it with a small set of diverse labels rather than several very similar labels. However, the given labels are just some names and we typically do not have any features for the labels. For a given image of a lab, the appropriate tags might be chair, table, carpet, fan etc. In addition to the above requirement of accurate prediction of the positive labels (tags), we also require the obtained set of positive labels (tags) to be {\em diverse}. That is, for an image of a lab, we would prefer tags like $\{$table, fan, carpet$\}$, rather than tags like $\{$long table, short table, chair$\}$. The same problem can be extended to several other tasks like document summarization, wikipedia document categorization etc. Moreover, most of the existing multi-label algorithms run in time linear in the number of labels which renders them infeasible for several real-time tasks ~\cite{WestonBU11, yu2014large}; exceptions include random forest based methods~\cite{agrawal2013multi, fxml}, however, it is not clear how to extend these methods to retrieve diverse set of labels.

In Section~\ref{sec:multi}, we propose a method that extends our diverse NN retrieval based method to obtain diverse and sub-linear (in the number of labels) time multi-label prediction. Our method is based on the LEML method~\cite{yu2014large} which is an embedding based method. The key idea behind embedding based methods for multi-label learning is to embed both the given set of labels as well as the data points into a common low-dimensional space. The relevant labels are then recovered by NN retrieval for the given query point (in the embedded space). That is, we embed each label $i$ into a $k$-dimensional space (say $y_i\in \mathbb{R}^k$) and the given test point is also embedded in the same space (say $x_q\in \mathbb{R}^k$). The relevant labels are obtained by finding $y_i$'s that are closest to $x_q$. Note that as the final prediction reduces to just NN retrieval, we can apply our method to obtain diverse set of labels in sub-linear time. 

\subsection{Evaluation Measures}
The need for diversity is not limited to retrieval and there has been significant research in many applications~\cite{dagli2006utilizing, khan2013efficient, ntoutsi2014strength}. In practice, diversity is a subjective phenomenon~\cite{mazur2010cultural}. For example, in active learning~\cite{dagli2006utilizing}, a diversity measure based on Shannon`s entropy is used. Probabilistic models like determinental point processes~\cite{gillenwater2012near, kulesza2011k} evaluate the diversity using real human feedback via Amazon`s Mechanical Turk. Structured SVM based framework~\cite{yue2008predicting} measures diversity using subtopic coverage on manually labelled data. 

Thus, the evaluation measures used to assess the performance of different methods are also different. In our work, the definition of what constitutes diversity varies across each task and is clearly described. As mentioned above, we use harmonic mean between accuracy and diversity as the main performance measure. We believe that this performance measure is suitable for several applications and helps us empirically compare different methods.

{\bf Paper Organization}: First, we formalize the diverse retrieval problem in Section~\ref{sec:prob_form}. We then present our diverse retrieval methods based on locality senstive hash functions in Section~\ref{sec:method}. We also present diverse multi-label prediction method in Section~\ref{sec:multi}. We describe our performance measure and experimental setup in Section~\ref{sec:expset}. Then, in Section~\ref{sec:exps}, we provide empirical results on two different (image and text) applications. Finally, we present our conclusions in Section~\ref{sec:conc}.


\section{Diverse Retrieval Optimization}
\label{sec:prob_form}

Given a set of data points ${\cal X}=\{(x_1, y_1), \dots, (x_n, y_n)\}$ where $x_i \in \mathbb{R}^d$, $y_i$ is a label and a query point $q\in \mathbb{R}^d$, the goal is two-fold: a) retrieve a set of points ${\cal R}_q=\{x_{i_1}, \dots, x_{i_k}\}$ such that a majority of their labels correctly predicts the label of $q$. b) The set of retrieved points ${\cal R}_q$ is ``diverse''.  Note that, in this work we are only interested in finding $k$ points that are relevent to the query. We formally start with the two definitions that are empirically successful and are widely used measures for similarity and diversity in the context of retrieval: 
\begin{definition}
	 For a given two points, {\bf dis-similarity} is defined as the distance between the two points, say $x$ and $y$, i.e., 
		$DisSim(x, y)=\|x-y\|_2^2$
	\label{def:sim}
\end{definition}
\begin{definition}
	 For a given set of points, {\bf diversity} is defined as the average pairwise distance between the points of the set, i.e., 
		$Div({\cal R}_q)=\sum_{a, b}\|x_{i_a}-x_{i_b}\|_2^2$
	\label{def:div}
\end{definition}
With the above definitions, our goal is to find a subset of $k$ points which are both relevent to the query and diversified among themselves. Although it is not quite clear on how relevance and diversity should be combined, we adopt a reminiscent~\cite{lin2011class} of the general paradigm in machine learning of combining loss functions that measures quality(e.g., training error, prior, or ``relevance'') and a regularization term that encourages desirable properties (e.g. smoothness, sparsity, or ``diversity''). To this end, we define the following optimization problem. 
\begin{equation}
\label{eq:qip_obj}
\begin{aligned}
  \min  		&\hspace{0.5em} \lambda \Sigma_{i=1}^{n}\alpha_{i}\|q - x_{i}\|^{2} - (1-\lambda)\Sigma_{ij}\alpha_{i}\alpha_{j}\|x_{i} - x_{j}\|^{2} \\
  \mathrm{s.t.}   &\hspace{0.5em} \Sigma_{i=1}^{n}\alpha_{i} = k; \forall i \in \{1, \ldots n\} \alpha_{i} \in \{0,1\}  \\
\end{aligned} 
\end{equation}

where $\lambda \in [0, 1]$ is a parameter that defines the trade-off between the two terms, and $\alpha_{i}$ takes the value 1 if $x_{i}$ is present in the result and 0 if it is not included in the retrieved result. Without loss of generality, we assume that $x_{i}, q$ are normalized to unit norm, and with some simple substitutions like $\alpha = [\alpha_{1}, \ldots \alpha_{n}]$, $c = -[q^{T}x_{1}, \ldots, q^{T}x_{n}]$, $G$ be gram matrix with $G_{ij} = x_{i}^{T}x_{j}$, the above objective is equivalent to
\begin{equation}
\label{eq:qip_obj_quad}
\begin{aligned}
  \min  		&\hspace{0.5em} \lambda c^{T}\alpha + \alpha^{T}G\alpha  \\
  \mathrm{s.t.}   &\hspace{0.5em} \alpha^{T}1 = k; \alpha \in \{0, 1\}^{n}\\
\end{aligned} 
\end{equation}
From now on, we refer to the diverse retrieval problem in the form of the optimization problem in Eq.(\ref{eq:qip_obj_quad}). Finding optimal solutions for the quadratic integer program in Eq.(\ref{eq:qip_obj_quad}) is NP-hard~\cite{wolsey1998integer}. Usually QP relaxations~\cite{ravikumar2006quadratic, jancsary2013learning} (which are often called linear relaxations), where integer constraints are relaxed to interval constraints, are efficiently solvable.
\begin{equation}
\label{eq:qip_obj_quad_relax}
\begin{aligned}
  \min  		&\hspace{0.5em} \lambda c^{T}\alpha + \alpha^{T}G\alpha  \\
  \mathrm{s.t.}   &\hspace{0.5em} \alpha^{T}1 = k; 0 \le \alpha \le 1\\
\end{aligned} 
\end{equation}

In this work, we consider the following simple approach\footnote{We refer to this method with QP-Rel in our experimental evaluations as one of our baselines.} to solve Eq.(\ref{eq:qip_obj_quad}): We first remove the integrality constraint on the variables i.e., allow variables to take on non-integral values to obtain a quadratic optimization program in Eq.(\ref{eq:qip_obj_quad_relax}). Now, we find the optimal solution to the quadratic program in Eq.(\ref{eq:qip_obj_quad_relax}). Note that the optimal solution to the relaxed problem is not necessarily integral. Therefore, we select the top $k$ values from the fractional solution and report it as the integral feasible solution to Eq.(\ref{eq:qip_obj_quad}). Although, this method yields a good solution to Eq.(\ref{eq:qip_obj_quad}) i.e., obtains accurate and diverse retrieval, solving the QP Relaxation is much more time consuming than the existing solutions (\textit{see Table~\ref{table:imagenet_results} for more details}). Therefore, it is of greatest interest to look for computationally efficient solutions for the diverse retrieval problem.

To this end, the existing approaches i.e., greedy methods~\cite{carbonell1998use, deselaers2009jointly, he2012gender, zhai2003beyond} for Eq.(\ref{eq:qip_obj}) and the QP relaxation method for Eq.(\ref{eq:qip_obj_quad}) suffer from two drawbacks: a) Running time of the algorithms is very high as it is required to recover several exact nearest neighbors. b) The obtained points might all be from a very small region of the space and hence the diversity of the selected set might not be large. c) Computation of the gram matrix may require an unreasonably large amount of memory overhead for large datasets. In this work, we propose a simple approach to overcome the above three issues.

\section{Methodology}
\label{sec:method}

To find nearest neighbors, the basic LSH algorithm concatenates a number of functions $h\in{\cal H}$ into one hash function $g\in{\cal G}$. Informally, we say that ${\cal H}$ is {\em locality-sensitive} if for any two points $a$ and $b$, the probability of $a$ and $b$ collide under a random choice of hash function depends only on the distance between $a$ and $b$. Several such families are known in the literature, see ~\cite{andoni2006near} for an overview.

\begin{definition} (Locality-sensitive hashing): A family of hash functions ${\cal H}:R^{d} \rightarrow \{0,1\} $ is called ($r, \epsilon, p, q$ )-sensitive if for any $a, b \in R^{d}$
\[
\begin{cases}
    Pr_{h \in {\cal H}}[h(a) = h(b)] \geq p, & \text{if } d(a, b) \leq r \\
    Pr_{h \in {\cal H}}[h(a) = h(b)] \leq q, & \text{if } d(a, b) \geq (1+\epsilon)r
\end{cases}
\]	 
Here, $\epsilon>0$ is an arbitrary constant, $p> q$ and $d(.,.)$ is some distance function.
\end{definition}

In this work, we use $\ell_2$ norm as the distance function and adopt the following hash function: 
\begin{equation}
	h(a)=sign(r\cdot a)
\end{equation}

where $r\sim {\cal N}(0, I)$. It is well known that $h(a)$ is a LSH function w.r.t $\ell_2$ norm and it is shown to satisfy the following: 
\begin{equation}
  \label{eq:hash}
  Pr(h(a)\neq h(b))=\frac{1}{\pi}\cos^{-1}\left(\frac{a\cdot b}{\|a\|_2\|b\|_2}\right).
\end{equation}

Our approach is based on the following high-level idea: perform {\em randomized approximate} nearest neighbor search for $q$ which selects points randomly from a small disk around $q$. As we show later, locality sensitive hashing with standard hash functions actually possess such a quality. Hence, the retrieved set would not only be accurate (i.e. has small distance to $q$) but also diverse as the points are selected randomly from the neighborhood of $q$. In our algorithm, we retrieve more than the required $k$ neighbors and then select a set of diverse neighbors by using a greedy method. See Algorithm~\ref{algo:alg_lsh} for a detailed description of our approach.

\begin{algorithm}
\DontPrintSemicolon 
\KwIn{${\cal X}=\{x_1\dots, x_n\}$, where $x_i \in R^{d}$, a query $q\in \mathbb{R}^d$ and $k$ an integer.}
 
\textbf{Preprocessing}: For each $i \in [1 \ldots L]$, construct a hash function, $g_{i} =[h_{1, i},\ldots ,h_{l, i}]$, where $h_{1, i}, \ldots ,h_{l, i}$ are chosen at random from $\cal H$. Hash all points in ${\cal X}$ to the $i^{th}$ hash table using the function $g_{i}$ \;
$R \leftarrow \phi $\;
\For{$i \gets 1$ \textbf{to} $L$} {
   Perform a hash of the query $g_{i}(q)$ \;
   Retrieve points from $i^{th}$ hash table \& append to ${\cal R}_q$\;
}
${\cal S}_{q} \leftarrow \phi $\;	
\For{$i \gets 1$ \textbf{to} $k$}{

	$r^{*} \leftarrow \text{argmin}_{(r \in {\cal R}_q)} (\lambda\|q - r\|^{2} - \frac{1}{i} \Sigma_{s \in {\cal S}_{q}}\|r - s\|^{2})$\;
	${\cal R}_q \leftarrow  {\cal R}_q \setminus r^{*}$\;
	${\cal S}_{q} \leftarrow {\cal S}_{q} \cup r^{*}$\;

}
\KwOut{${\cal S}_{q}$, $k$ diverse set of points}
\caption{LSH with random hash functions (LSH-Div)}
\label{algo:alg_lsh}
\end{algorithm}

The algorithm executes in two phases: i) perform search through the hash tables, line(2-4), to report the approximate nearest neighbors, $R_{q} \subset {\cal X}$ and ii) perform $k$ iterations, line(6-9), to report a diverse set of points, $S_{q} \subset R_{q}$. Throughout the algorithm, several variables are used to maintain the trade-off between the accuracy and diversity of the retrieved points. The essential control variables that direct the behaviour of the algorithm are: i) the number of points retrieved from hashing, $|R_{q}|$ and ii) the number of diverse set of points to be reported, $k$. Here, $R_{q}$ can be controlled at the design of hash function, i.e., the number of matches to the query is proportional to $n^{\frac{1}{1+\epsilon}}$. Therefore, line 7 (can be optional) is critical for the efficiency of the algorithm, since it is an expensive computation, especially when $|R_{q}|$ is very big, or $k$ is large. More details of our algorithm are discussed in section~\ref{sec:alg_anlys}.

\subsection{Diversity in Randomized Hashing} 
\label{sec:lsh_div}

An interesting aspect of the above mentioned LSH function in Eq.(\ref{eq:hash}) is that it is unbiased towards any particular direction, i.e., $Pr(h(q)\neq h(a))$ is dependent only on $\|q-a\|_2$ (assuming $q, a$ are both normalized to unit norm vectors). But, depending on a sample hyper-plane $r\in \mathbb{R}^d$, a hash function can be biased towards one or the other direction, hence preferring points from a particular region. Interestingly, we show that if the number of hash bits is large, then all the directions are sampled uniformly and hence the retrieved points are sampled uniformly from all the directions. {\em That is, the retrieval is not biased towards any particular region of the space.} We formalize the above observation in the following lemma. 
\begin{definition}({\bf Hoeffding`s Inequality}~\cite{lugosi2004concentration})
	Let $Z_{1}, \ldots, Z_{n}$ be $n$ i.i.d. random variables with $f(Z) \in [a, b]$ . Then for all $\epsilon \ge 0$, with probability at least $1-\delta$ we have 
	\begin{equation*}
		P[\| \frac{1}{n}\sum_{i=1}^{n}f(Z_{i}) - E(f(Z)) \|] \le (b-a)\sqrt{\frac{\log(\frac{2}{\delta})}{2n}}
	\end{equation*}
\end{definition}

\begin{lemma}
\label{lm:lsh_lemma}
    Let $q\in \mathbb{R}^d$ and let ${\cal X}_q=\{x_1, \dots, x_m\}$ be unit vectors such that $\|q-x_i\|_2=\|q-x_j\|_2=r,$ $\forall i,j$. Let $p=\frac{1}{\pi}\cos^{-1}(1-r^2/2)$. Also, let $r_1, \dots, r_\ell \sim {\cal N}(0, I)$ be $\ell$ random vectors. Define hash bits $g(x)=[h_1(x) \dots h_\ell(x)]\in \{0, 1\}^{1\times \ell}$, where hash functions $h_b(x)=sign(r_b\cdot x),$ $1\leq b\leq \ell$. Then, the following holds $\forall i$: 
\begin{equation*}
	p-\sqrt{\frac{\log(\frac{2}{\delta})}{2l}} \le \frac{1}{l} ||g(q) -g(x_{i})||_{1} \le p+\sqrt{\frac{\log(\frac{2}{\delta})}{2l}}
\end{equation*}

That is, if $\sqrt{l}\gg 1/p$, then hash-bits of the query $q$ are almost equi-distant to the hash-bits of each $x_i$.  
\end{lemma}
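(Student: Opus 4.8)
The plan is to recognize the normalized Hamming distance $\frac{1}{\ell}\|g(q)-g(x_i)\|_1$ as an empirical average of i.i.d.\ bounded random variables and then invoke Hoeffding's inequality directly. First I would fix an index $i$ and define, for each bit $b\in\{1,\dots,\ell\}$, the indicator $Z_b=\mathbf{1}[h_b(q)\neq h_b(x_i)]$. Since $g(q)$ and $g(x_i)$ are binary vectors, their $\ell_1$ distance simply counts disagreeing bits, so $\|g(q)-g(x_i)\|_1=\sum_{b=1}^{\ell}Z_b$ and hence $\frac{1}{\ell}\|g(q)-g(x_i)\|_1=\frac{1}{\ell}\sum_b Z_b$. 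Because the hyperplanes $r_1,\dots,r_\ell$ are drawn i.i.d.\ from ${\cal N}(0,I)$ and each $Z_b$ depends only on $r_b$, the $Z_b$ are i.i.d., and each lies in $[0,1]$.

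The next step is to compute $E[Z_b]$. Since $q$ and $x_i$ are unit vectors with $\|q-x_i\|_2=r$, expanding the squared norm gives $q\cdot x_i=1-r^2/2$. Substituting into the collision formula Eq.(\ref{eq:hash}) yields $\Pr(h_b(q)\neq h_b(x_i))=\frac{1}{\pi}\cos^{-1}(1-r^2/2)=p$, so $E[Z_b]=p$ --- and, crucially, the \emph{same} value $p$ for every $i$, precisely because all $x_i$ are equidistant from $q$.

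With these pieces in place, applying Hoeffding's inequality with $f(Z_b)=Z_b$, mean $p$, range $b-a=1$, and sample size $n=\ell$ gives, with probability at least $1-\delta$, the two-sided estimate $|\frac{1}{\ell}\sum_b Z_b-p|\le\sqrt{\log(2/\delta)/(2\ell)}$, which is exactly the claimed bound. The concluding remark that the hash-bits are ``almost equidistant'' then follows because the concentration centers at the common value $p$ independent of $i$; if one wants the inequality to hold simultaneously for all $m$ points in ${\cal X}_q$, a union bound replaces $\delta$ by $\delta/m$ (equivalently inflating the failure probability by a factor $m$).

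Honestly, there is no genuine obstacle here: the result is essentially a one-line consequence of Hoeffding once the random variables are set up correctly. The only points that require care are the algebraic identity $q\cdot x_i=1-r^2/2$, which uses that $q$ is a unit vector (consistent with the normalization assumed earlier in the section), and the observation that the independence of the $Z_b$ is inherited from the independence of the sampled hyperplanes $r_b$. I would make sure to state both explicitly, since they are what license the expectation computation and the application of the concentration bound, respectively.
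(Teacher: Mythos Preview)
Your proposal is correct and follows essentially the same route as the paper: define the Bernoulli indicators $Z_b=\mathbf{1}[h_b(q)\neq h_b(x_i)]$, observe they are i.i.d.\ with mean $p$ by Eq.~(\ref{eq:hash}) and the identity $q\cdot x_i=1-r^2/2$, and apply Hoeffding's inequality. You actually supply more detail than the paper's proof (which is a terse three sentences), including the explicit computation of $q\cdot x_i$ and the remark about a union bound over $i$, but the argument is the same.
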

\begin{proof}
Consider random variable $Z_{ib},\ 1\leq i\leq m, \ 1\leq b\leq \ell$ where $Z_{ib}=1$ if $h_b(q)\neq h_b(x_i)$ and $0$ otherwise. Note that $Z_{ib}$ is a Bernoulli random variable with probability $p$. Also, $Z_{ib},\ \forall 1\leq b\leq \ell$ are all independent for a fixed $i$. Hence, applying Hoeffding's inequality, we obtain the required result. 
\end{proof}
Note that the above lemma shows that if $x_1, \dots, x_m$ are all at distance $r$ from a given query $q$ then their respective hash bits are also at a similar distance to the hash bits of $q$. {\em That is, assuming randomization selection of the candidates from a hash bucket, probability of selecting any $x_i$ is almost the same. That is, the points selected by LSH are nearly uniformly at random and are diverse.} 

\subsection{Randomized Compact Hashing}

In Algorithm~\ref{algo:alg_lsh}, we obtained hash functions by selecting hyper-planes from a normal distribution. The conventional LSH approach considers only random projections. Naturally, by doing random projection, we will lose some accuracy. But we can easily fix this problem by doing multiple rounds of random projections. However, we need to perform a large number of projections (i.e. hash functions in the LSH setting) to increase the probability that similar points are mapped to similar hash codes. A fundamental result of Johnson and Lindenstrauss Theorem~\cite{john1984} says that $O(\frac{\ln{n}}{\epsilon^{2}})$ random projections are needed to preserve the distance between any two pair of points, where $\epsilon$ is the relative error.

Therefore, using many random vectors to generate the hash tables (a long codeword), leads to a large storage space and a high computational cost, which would slow down the retrieval procedure. In practice, however, the data lies in a very small dimensional subspace of the ambient dimension and hence a random hyper-plane may not be very informative. Instead, we wish to use more data driven hyper-planes that are more discriminative and separate out neighbors from far-away points. To this end, we obtain the hyper-planes $r$ using principal components of the given data matrix. Principal components are the directions of highest variance of the data and captures the geometry of the dataset accurately. Hence, by using principal components, we hope to reduce the required number of hash bits and hash tables required to obtain the same accuracy in retrieval. 

That is, given a data matrix $X\in \mathbb{R}^{d\times n}$ where $i$-th column of $X$ is given by $x_i$, we obtain top-$\alpha$ principal components of $X$ using SVD. That is, let $U\in \mathbb{R}^{d\times \alpha}$ be the singular vectors corresponding to the top-$\alpha$ singular values of $X$. Then, a hash function is given by: 
$h(x)=sign(r^TU^Tx)$ where $r\sim {\cal N}(0, I)$ is a random $\alpha$-dimensional hyper-plane. In the subsequent sections, we denote this algorithm using LSH-SDiv.

Many learning based hashing methods~\cite{kulis2009learning, wang2010sequential, weiss2009spectral} are proposed in literature. The simplest of all such approaches is PCA Hashing~\cite{wang2006annosearch} which chooses the random projections to be the principal directions of the data directly. Our algorithm LSH-SDiv method is different from PCA Hashing in the sense that we still select random directions in the top components. Note that the above hash function has reduced randomness but still preserves the discriminative power by projecting the randomness onto top principal components of $X$. As shown in Section~\ref{sec:exps}, the above hash function provides better nearest neighbor retrieval while recovering more diverse set of neighbors. 

\subsection{Diverse Multi-label Prediction}
\label{sec:multi} 

We now present an extension of our method to the problem of multi-label classification. Let ${\cal X}=\{x_1, \dots, x_n\}$, $x_i\in \mathbb{R}^d$ and ${\cal Y}=\{y_1, \dots, y_n\}$, where $y_i \in \{-1, 1\}^L$ be $L$ labels associated with the $i$-th data point. Then, the goal in the standard multilabel learning problem is to predict the label vector $y_q$ accurately for a given query point $q$. Moreover, in practice, the number of labels $L$ is very large, so we require our prediction time to scale sublinearly with $L$. 

In this work, we build upon the LEML method proposed by~\cite{yu2014large} that can solve multi-label problems with a large number of labels and data points. In particular, LEML learns matrices $W, H$ s.t. given a point $q$, its predicted labels is given by $y_q=sign(WH^Tx)$ where $W\in \mathbb{R}^{L\times k}$ and $H\in R^{d\times k}$ and $k$ is the rank of the parameter matrix $WH^T$. Typically, $k\ll \min(d, L)$ and hence the method scales linearly in both $d$ and $L$. For instance, its prediction time is given by $O((d+L)\cdot k)$. 

However, for several widespread problems, the $O(L)$ prediction time is quite large and makes the method infeasible in practice. Moreover, the obtained labels from this algorithm can all be very highly correlated and might not provide a diverse set of labels which we desire. 

We overcome both of the above limitations of the algorithm using the LSH based algorithm introduced in the previous section. We now describe our method in detail. Let $W_1, W_2, \dots, W_L$ be $L$ data points where $W_i\in \mathbb{R}^{1\times k}$ is the $i$-th row of $W$. Also, let $H^Tx$ be a query point for a given $x$. Note that the task of obtaining $\alpha$ positive labels for given $x$ is equivalent to finding $\alpha$ largest $W_i\cdot (H^Tx)$. Hence, the problem is the same as nearest neighbor search with diversity where the data points are given by ${\cal W}=\{W_1, W_2, \dots, W_L\}$ and the query point is given by $q=H^Tx$. 

We now apply our LSH based methods to the above setting to obtain a ``diverse'' set of labels for the given data point $x$. Moreover, the LSH Theorem by~\cite{lsh} shows that the time of retrieval is sublinear in $L$ which is necessary for the approach to scale to a large number of examples. See Algorithm~\ref{algo:alg_mult} for the pseudo-code of our approach. 
\begin{algorithm}[t!]
\KwIn{{\small Train data: ${\cal X}=\{x_1, \ldots, x_n\}$, ${\cal Y}=\{y_1, \dots, y_n\}$. Test data: ${\cal Q}=\{q_1, \dots, q_m\}$}. Parameters: $\alpha$, $k$.}
\nonl 
[W, H]=LEML(${\cal X}$, ${\cal Y}$, $k$)\;
\nonl
${\cal S}_{q}=\text{LSH-SDiv}(W, H^Tq, \alpha),\ \ \forall q\in {\cal Q}$\;
\nonl
$\widehat{y}_q=\text{Majority}(\{y_i \text{ s.t. } x_i\in {\cal S}_q\}), \ \ \ \forall q\in {\cal Q}$\;
\KwOut{${\cal \widehat{Y}}_Q=\{\widehat{y}_{q_1}, \dots, \widehat{y}_{q_m}\}$}
\caption{LSH based Multi-label Classification}
\label{algo:alg_mult}
\end{algorithm}

\subsection{Algorithmic Analysis}
\label{sec:alg_anlys}

As discussed above, locality sensitive hashing is a sub-linear time algorithm for approximate near(est) neighbor search that works by using a carefully selected hash function that causes objects or documents that are similar to have a high probability of colliding in a hash bucket. Like most indexing strategies, LSH consists of two phases: {\em hash generattion}, where the hash tables are constructed and {\em querying}, where the hash tables are used to look up for points similar to the query. Here, we briefly comment on the algorithmic and statistical aspects which are important for the suggested algorithms in the previous sections. 

{\bf Hash Generation}: In our algorithm, for $l$ specified later, we use a family $\cal{G}$ of hash functions $g(x) = (h_{1}(x), \ldots, h_{l}(x))$, where $h_{i} \in H$. For an integer $L$, the algorithm chooses $L$ functions $g_{1}, \ldots, g_{L}$ from $\cal{G}$, independently and uniformly at random. The algorithm then creates $L$ hash arrays, one for each function $g_{j}$. During preprocessing, the algorithm stores each data point $x \in {\cal X}$ into bucket $g_{j}(x)$ for all $j=1, \ldots, L$. Since the total number of buckets may be large, the algorithm retains only the non-empty buckets by resorting to standard hashing.

{\bf Querying}: To answer a query $q$, the algorithm evaluates $g_{1}, \ldots, g_{L}$, and looks up the points stored in those buckets in the respective hash arrays. For each point $p$ found in any of the buckets, the algorithm computes the distance from $q$ to $p$, and reports the point $p$ if the distance is at most $r$. Different strategies can be adopted to limit the number of points reported to the query $q$, see ~\cite{andoni2006near} for an overview. 

{\bf Accuracy}: Since, the data structure used by LSH scheme is randomized: the algorithm must output all points within the distance $r$ from $q$, and can also output some points within the distance $(1+\epsilon)r$ from $q$. The algorithm guarantees that each point within the distance $r$ from $q$ is reported with a constant (tunable) probability. The parameters $l$ and $L$ are chosen~\cite{indyk1998approximate} to satisfy the requirement that a near neighbors are reported with a probability at least $(1- \delta)$. Note that the correctness probability is defined over the random bits selected by the algorithm, and we do not make any probabilistic assumptions about the data distribution.  

{\bf Diversity}: In lemma~\ref{lm:lsh_lemma}, if the number of hash bits is large i.e,  if $\sqrt{l}\gg 1/p$, then hash-bits of the query $q$ are almost equi-distant to the hash-bits of each point in $x_i$. Then all the directions are sampled uniformly and hence the retrieved points are uniformly spread in all the directions.  Therefore, for reasonable choice of the parameter $l$, the proposed algorithm obtains diverse set of points, ${\cal S}_{q}$ and has strong probabilistic guarantees for large databases of arbitrary dimensions. 

{\bf Scalability}: The time for evaluating the $g_{i}$ functions for a query point $q$ is $O(dlL)$ in general. For the angular hash functions chosen in our algorithn, each of the $l$ bits output by a hash function $g_{i}$ involves computing a dot product of the input vector with a random vector defining a hyperplane. Each dot product can be computed in time proportional to the number of non-zeros $\zeta$ rather than $d$. Thus, the total time is $O(\zeta lL)$. For an interested reader, see that the Theorem 2 of~\cite{charikar2002similarity} guarantees that $L$ is at most $O(N^{\frac{1}{(1 + \epsilon)}})$, where $N$ denotes the total number of points in the database.


\section{Experimental Setup}
\label{sec:expset}

We demonstrate our approach applied to the following two tasks: (a) Image Category Retrieval and (b) Multi-label Prediction 

In the case of image retrieval task, we are interested in retrieving diverse images of a specific category. In our case, each of the image categories have associated subcategories (e.g.. {\em flower} is a category and {\em lilly} is a subcategory) and we would like to retrieve the relevant (to the category) but diverse images that belong to different sub-categories. The query is represented as a hyperplane that is trained (SVM~\cite{shalev2011pegasos}) offline to discriminate  between positive and negative classes.  

Next, we apply our diverse retrieval method to the  multi label classification problem; see previous section for more details. Our approach is evaluated on LSHTC\footnote{\url{http://lshtc.iit.demokritos.gr/LSHTC3_CALL}} dataset containing Wikipedia text documents. Each document is represented with the help of a set of categories or class labels. A document can have multiple labels and we are interested in predicting a set of categories to a given document. We model this problem as retrieving a relevant set of labels from a large pool of labels.  In this case, we retrieve labels that match the semantics of the document and also have enough diversity among them.
	 
\subsection{Evaluation Criteria}

In both these experiments, our goal is two-fold: 1) improve diversity in the retrieval and 2) demonstrate speedups of the our proposed algorithms. We now present formal metrics to measure performance of our method on three key aspects of NN retrieval: (i) accuracy (ii) diversity and (iii) efficiency.  We characterize the performance in terms of the following measures: 

\begin{itemize}
 	\item \textbf{Accuracy:} We denote precision at $k$ ($P$@$k$) as the measure of accuracy of the retrieval. This is  the proportion of the relevant instances in the top $k$ retrieved results. In our results, we also report the recall and f-score results when applicable, to compare the methods in terms of multiple measures.

	\item \textbf{Diversity:} For image retrieval, the diversity in the retrieved images is measured using entropy as $D = \frac{\Sigma_{i=1}^{m} s_{i}\log s_{i}}{\log m}$, where $s_{i}$ is the fraction of images of $i^{th}$ subcategory, and $m$ is the number of subcategories for the category of interest. For multi label classification, the relationships between the labels is not a simple tree. It is better captured using a graph and the diversity is then computed using drank~\cite{SwamyRS14}. Drank captures the extent to which the labels of the documents belong to multiple categories.  
	
\item \textbf{Efficiency:} Given a query, we consider retrieval time to be the time between posing a query and retrieving images/labels from the database. For LSH based methods, we first load all the LSH hash tables of the database into the main memory and then retrieve images/labels from the database.  Since, the hash tables are processed offline, we do not consider the time spent to load the hash tables into the retrieval time. All the retrieval times are based on a Linux machine with Intel E5-2640 processor(s) with 96GB RAM.

\end{itemize}

\subsection{Combining Accuracy and Diversity}  

Tradeoffs between accuracy and efficiency in NN retrieval have been studied well in the past~\cite{basri2011approximate, jain2010hashing, rastegari2011scalable, yu2011exact}. Many methods compromise on the accuracy for better efficiency. Similarly, emphasizing higher diversity may also lead to poor accuracy and hence, we want to formalize a metric that captures the trade-off between diversity and accuracy. 

To this end, we use (per data point) harmonic mean of accuracy and diversity as overall score for a given method (similar to f-score providing a trade off between precision and recall). That is, $h-score({\cal A})=\sum_i \frac{2\cdot Acc(x_i)\cdot Diversity(x_i)}{Acc(x_i)+Diversity(x_i)}$, where ${\cal A}$ is a given algorithm and $x_i$'s are given test points. In all of our experiments, parameters are chosen by cross validation such that the overall $h$-score is maximized.


\section{Empirical Results}
\label{sec:exps}
\renewcommand{\tabcolsep}{3pt}

\begin{table*}[!t]
\begin{center}
\caption{We show the performance of various diverse retrieval methods on the ImageNet dataset. We evaluate the performance in terms of precision(P), sub-topic recall(SR) and Diversity(D) measures at top-10, top-20 and top-30 retrieved images. Numbers in \textbf{bold} indicate the top performers. {\bf NH} \textit{corresponds to the method without using any hash function.} Notice that for all methods, \textit{except Greedy}, LSH-Div and LSH-SDiv hash functions consistently show better performance in terms of h-score than the method with NH. Interestingly, we also have the top performers best in terms of retrieval time.} 
\label{table:imagenet_results}
  \begin{tabular}{|l|l||c|c|c|c|r||c|c|c|c|r||c|c|c|c|r|c|c|c|c|r|}
    \hline 
		&{ } & \multicolumn{5}{|c}{\textbf{precision at 10}} & \multicolumn{5}{|c|}{\textbf{precision at 20}} & \multicolumn{5}{|c|}{\textbf{precision at 30}} \\

		\hline
		\textbf{Method} & \textbf{Hash Function} & P & SR & D & h & \pbox{10cm}{time\\(sec)} & P & SR & D & h & \pbox{10cm}{time\\(sec)} & P & SR & D & h & \pbox{10cm}{ time\\(sec)}\\
		\hline \hline
	{ } & NH 			&1.00	&0.60	&0.53 	&0.66	&0.621 				&0.99	&0.72	&0.60 	&0.73	&0.721  &0.99	&0.79	&0.65 	&0.77 &0.845\\ 
	NN & LSH-Div 	&0.97	&0.79	&0.76 	&\textbf{0.84}&\textbf{0.112}		&0.93	&0.93	&0.86 	&\textbf{0.89}	&\textbf{0.137}  &0.89	&0.98	&0.91 	&0.90 &0.179\\ 
	 { }  & LSH-SDiv		&0.98	&0.76	&0.73 	&0.83	&0.181			&0.95	&0.89	&0.85 	&0.89	&0.183	&0.92	&0.95	&0.89 	&\textbf{0.90} &\textbf{0.106} \\ 
		\hline \hline
	{ } & NH 			&1.00 	&0.73	&0.69	&0.81	&0.804				&0.99	&0.79 	&0.70 	&0.81	&0.793 &0.99	&0.88	&0.77 	&0.86 &0.901\\ 
	Rerank & LSH-Div 	&0.93	&0.80	&0.76 	&0.83	&0.142				&0.92	&0.93	&0.86 	&0.88	&0.146	&0.87	&0.98	&0.90 	&0.88 &0.214\\ 
	{ } & LSH-SDiv 	&0.95	&0.79	&0.76 	&\textbf{0.84}	&\textbf{0.154}		&0.94	&0.91	&0.85 	&\textbf{0.89}	&\textbf{0.179}	&0.90	&0.95	&0.88 	&\textbf{0.89} &\textbf{0.203}\\ 
		\hline \hline
	{ } & NH    &0.95	&0.75	&0.71 	&0.80	&5.686	&0.98	&0.86	&0.77 	&\textbf{0.85}	&11.193	 &0.97	&0.90	&0.80 	&\textbf{0.87} &17.162\\ 
	Greedy~\cite{deselaers2009jointly} & LSH-Div 	&0.89	&0.80	&0.76 	&0.81	&1.265				&0.68	&0.88	&0.81 	&0.72	&2.392	&0.53	&0.89	&0.80 	&0.62 &4.437\\ 
	{ } & LSH-SDiv	&0.91	&0.78	&0.76 	&\textbf{0.82}	&\textbf{0.986}		&0.69	&0.88	&0.80 	&0.73	&2.417	&0.52	&0.88	&0.80 	&0.61 &3.537\\ 
		\hline \hline
	{ }& NH  &0.92	&0.73	&0.68 	&0.77	&5.168	&0.95	&0.86	&0.75 	&0.83	&10.585	 &0.96	&0.90	&0.76 	&0.84  &16.524\\ 
	MMR~\cite{carbonell1998use} & LSH-Div 		&0.91	&0.77	&0.73 	&0.80	&1.135				&0.91	&0.92	&0.85 	&0.87	&2.378	 &0.87	&0.97	&0.89 	&0.88	&3.828\\ 
	{ } & LSH-SDiv		&0.92	&0.78	&0.75 	&\textbf{0.81} &\textbf{1.102}		&0.93	&0.91	&0.84 	&\textbf{0.88}	&\textbf{2.085} &0.89	&0.96	&0.88 	&\textbf{0.88} &\textbf{4.106}\\ 
		\hline \hline
	{ } & NH				&1.00	&0.74	&0.69 	&0.81	&\underline{704.9}				&1.00	&0.82	&0.73 	&0.84	&\underline{947.09}	 	&1.00	&0.87	&0.76 	&0.86  &\underline{1137.19}\\ 
	QP-Rel & LSH-Div 		&0.93	&0.80	&0.77 	&0.83	&0.487				&0.92	&0.94	&0.86 	&0.88	&0.499	 	&0.86	&0.98	&0.90 	&0.88	&0.502\\ 
	{ } & LSH-SDiv		&0.97	&0.78	&0.74 	&\textbf{0.83} &\textbf{0.447}		&0.96	&0.89	&0.82 	&\textbf{0.88}	&\textbf{0.464}	 &0.93	&0.95	&0.86 	&\textbf{0.89} &\textbf{0.473}\\ 
		\hline
    
\end{tabular}
\end{center}
\end{table*}

\subsection{Image Category Retrieval} 
\label{sec:exp_cat}

For the image category retrieval, we consider a set of $42K$ images from imageNet database~\cite{deng2009imagenet} with $7$ synsets (categories) (namely \emph{animal, bottle, flower, furniture, geography, music, vehicle}) with five subtopics for each. Images are represented as a bag of visual words histogram with a vocabulary size of $48K$ over the densely extracted SIFT vectors. For each categorical query, we train an SVM hyperplane using LIBLINEAR~\cite{fan2008liblinear}. Since, there are only seven categories in our dataset, for each category we created 50 queries by randomly sampling 10\% of the images. After creating the queries, we are left with $35K$ images which we use for the retrieval task. We report the quantitative results in Table~\ref{table:imagenet_results} by the mean performance of all 350 queries. A few qualitative results on this dataset are shown in Figure~\ref{fig:imagenet_qresults}. 

We conducted two sets of experiments, 1) Retrieval without using hash functions and 2) Retrieval using hash functions, to evaluate the effectiveness of our proposed method. In the first set of experiments, we directly apply the existing diverse retrieval methods on the complete dataset. In the second set of experiments, we first select a candidate set of points by using the hash functions and then apply one of these methods to retrieve the images. 

We hypothesize that using hash functions in combination with any of the diverse retrieval methods will improve the diversity and the overall performance (h-score) with significant speed-ups. To validate our hypothesis, we evaluate various diverse retrieval methods in combination with our hash functions as described in Algortihm~\ref{algo:alg_lsh}. It can be noted that lines 6-10 in Algorithm~\ref{algo:alg_lsh} can be replaced with various retrieval methods and can be compared against the methods without hash functions. In particular, we show the comparison with the following retrieval methods: the k-nearest neighbor (NN), the QP-Rel method and the diverse retrieval methods like Backward selection (Rerank), Greedy~\cite{he2012gender}, MMR~\cite{carbonell1998use}. In Table~\ref{table:imagenet_results}, we denote NH as Null Hash i.e, without using any hash function, LSH-Div with the random hash function and LSH-SDiv with the (randomized) PCA hash function. 

We can see in Table~\ref{table:imagenet_results},  that our hash functions in combination with various methods are superior to the methods with NH. Our extensions based on LSH-Div and LSH-SDiv hash functions out-perform in all cases with respect to the h-score. Interestingly, LSH-Div and LSH-SDiv with NN report maximum h-score than any other methods. This observation implies that diversity can be preserved in the retrieval by directly using standard LSH based nearest neighbor method. We also report a significant speed up even for a moderate database of $35K$ images. Readers familiar with LSH will also agree that our methods will enjoy better speed up in presence of larger databases and higher dimensional representations. 

In Table~\ref{table:imagenet_results} the greedy method with our hash functions reports very low precision at top-20 and top-30 retrievals. This indicates that the greedy method may sometimes pick points too far from the query and might report images that are not relevant to the query. This observation is illustrated with our toy dataset in Figure~\ref{fig:toyDataset}. Notice that the existing diverse retrieval methods with NH report diverse images, but they are highly inefficient with respect to the retrieval time. Especially, the QP-Rel method also needs unreasonable memory for storing the gram matrix. To avoid any memory leaks, we partitioned the images into seven (number of categories) blocks and evaluated the queries independently i.e., when the query is flower, we only look at the block of flower images and retrieve diverse set of flowers. Although the QP-Rel method acheives better diversity, it is still computationally very expensive. Having such clear partitions is highly impractical and not feasible for other task/large datasets. We therefore, omit the results using QP-Rel method on the multi-label prediction task. 


\subsection{Multi-label Prediction} 
\label{sec:exp_lab}

We use one of the largest  multi-label datasets, LSHTC, to show the effectiveness of our proposed method. This dataset contains  the wikipedia documents with more than $300K$ labels. To avoid any bias towards the most frequently occurring labels, we selected only the documents which have at least 4 or more labels.  Thus, we have a data set of $754K$ documents with $259K$ unique labels. For our experiment, we randomly divide the data in 4:1 ratio for training and testing respectively. We use the  large scale multi label learning (LEML)~\cite{yu2014large} algorithm to train a linear multi-class classifier. This method is shown to provide state of the art results on many large multi label prediction tasks.
	
In Table~\ref{table:lshtc_results}, we report the performance of the label prediction with LEML and compare with our methods that predict diverse labels efficiently. Since, the number of labels for each document varies, we used a threshold parameter to limit the number of predicted labels to the documents. We selected the threshold by cross validating such  that it  maximizes the h-score. The precision and recall values corresponding to this setting are shown in the table. We also show the f-score computated as the harmonic mean of precision and recall in each case.

In LSHTC3 dataset, the labels are associated with a category hierarchy which is cyclic and unbalanced i.e., both the documents and subcategories are allowed to belong to more than one other category. In such cases, the notion of diversity i.e., the extent to which the predicted labels belong to multiple categores can be estimated using drank~\cite{SwamyRS14}. Since, the category hierarchy graph is cyclic, we prune the hierarchy graph to obtain a balanced tree by using the BFS traversal. The diversity of the predicted labels is computed as the drank score on this balanced tree. In Table~\ref{table:lshtc_results}, we report the overall performance of a method in terms of h-score i.e., the precision and the drank score.
	 
As can be seen from Table~\ref{table:lshtc_results}, the LSH-Div method shows a reasonable speedup but fails to report many of the accurate labels i.e., has low precision. Since, the LSHTC3 dataset is highly sparse in a large dimension, random projections generated by LSH-Div method are a bit inaccurate and might have resulted in poor accuracy. 

The proposed LSH-SDiv approach significantly boosts the accuracy, since, the random vectors in the hash function are projected onto the principal components that capture the data distribution accurately. The results shown in table are obtained by using 100 random projections for both LSH-Div and LSH-SDiv hash functions. For the LSH-SDiv method, we project the random projections onto the top 200 singular vectors obtained from the data points. 

Clearly, LSH-SDiv based hash function improves the diversity within the labels and outperforms LEML, MMR, PCA-Hash and LSH-Div methods in terms of overall performance (h-score). In summary, we obtain a speed-up greater than 20 over LEML method and greater than 80 over MMR method on this dataset. Note that, we omitted the results with greedy method as they failed to report accurate labels in this task.  

\begin{table}[t]
\begin{center}
\caption{Results on LSHTC3 challenge dataset with LEML, MMR, PCA-Hash, LSH-Div and LSH-SDiv methods. LSH-SDiv method significantly outperforms both LEML, MMR, PCA-Hash and LSH-Div methods in terms of overall performance, h-score as well as the retrieval time.}
\label{table:lshtc_results}
\begin{tabular}{|l|c|c|c|c|c|r|}
\hline
\textbf{Method} &  P & R & f-score & D & h & \pbox{10cm}{ time (msec)}\\
\hline\hline
LEML~\cite{yu2014large}         &0.304   &0.196      &0.192   &0.827        	&0.534 			&137.1\\ \hline
MMR~\cite{carbonell1998use}         &0.275   &0.134      &0.175   &0.865        	&0.418 			&458.8\\ \hline
PCA-Hash							 &0.265   &0.096	   &0.121   &0.872         &0.669          &5.9 \\ \hline
LSH-Div     					     &0.144   &0.088      &0.083   &0.825    		&0.437 			&7.2\\ \hline
LSH-SDiv     					     &0.318   &0.102      &0.133   &\textbf{0.919} 	&\textbf{0.734} &\textbf{5.7}\\
\hline
\end{tabular}
\end{center}
\vspace{-0.5cm}
\end{table}

\begin{figure*}
\begin{subfigure}{.33\textwidth}
  \centering
  \includegraphics[width=\linewidth]{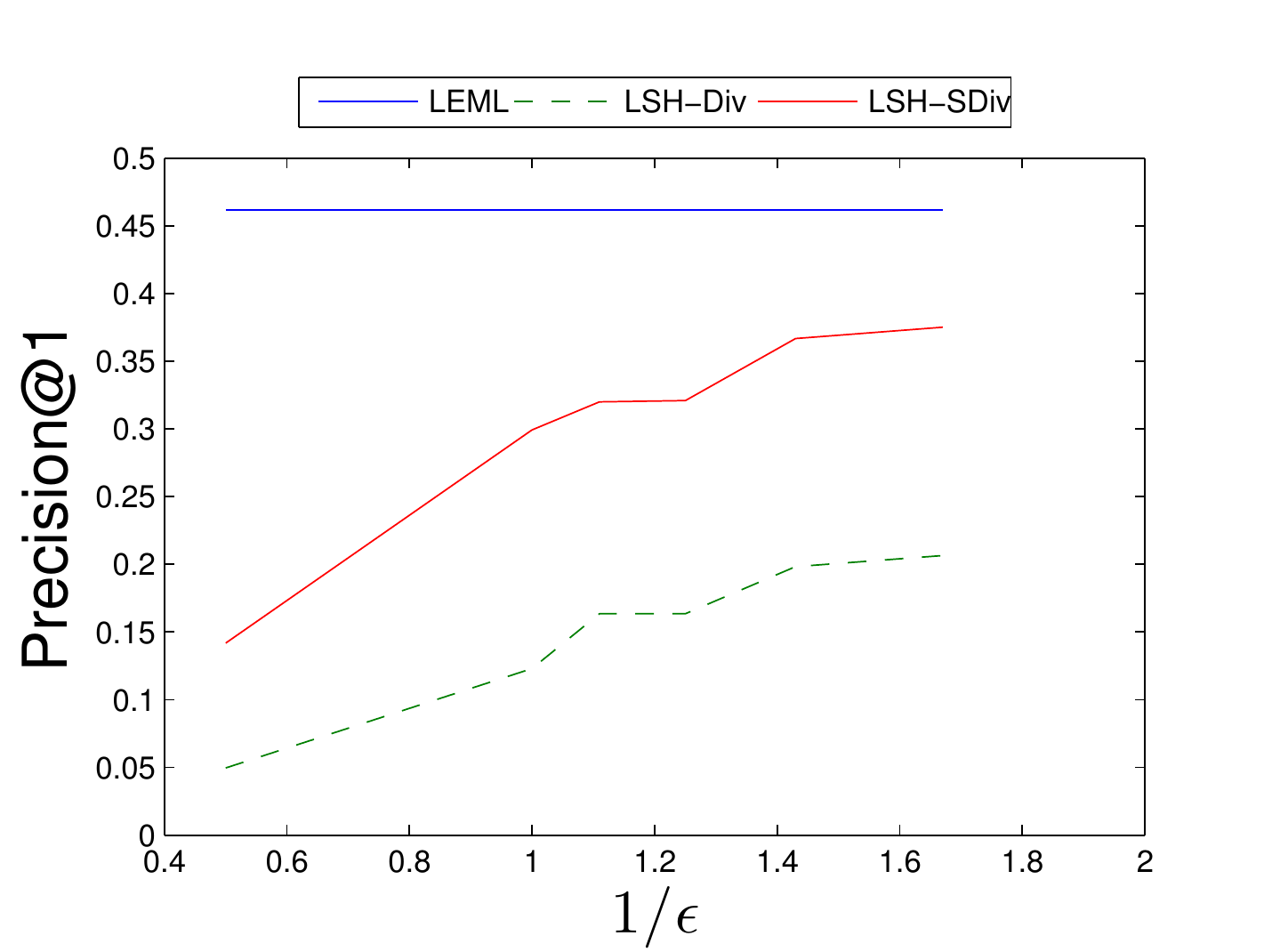}
  \caption{}
  \label{fig:example_naive}
\end{subfigure}%
\begin{subfigure}{.33\textwidth}
  \centering
  \includegraphics[width=\linewidth]{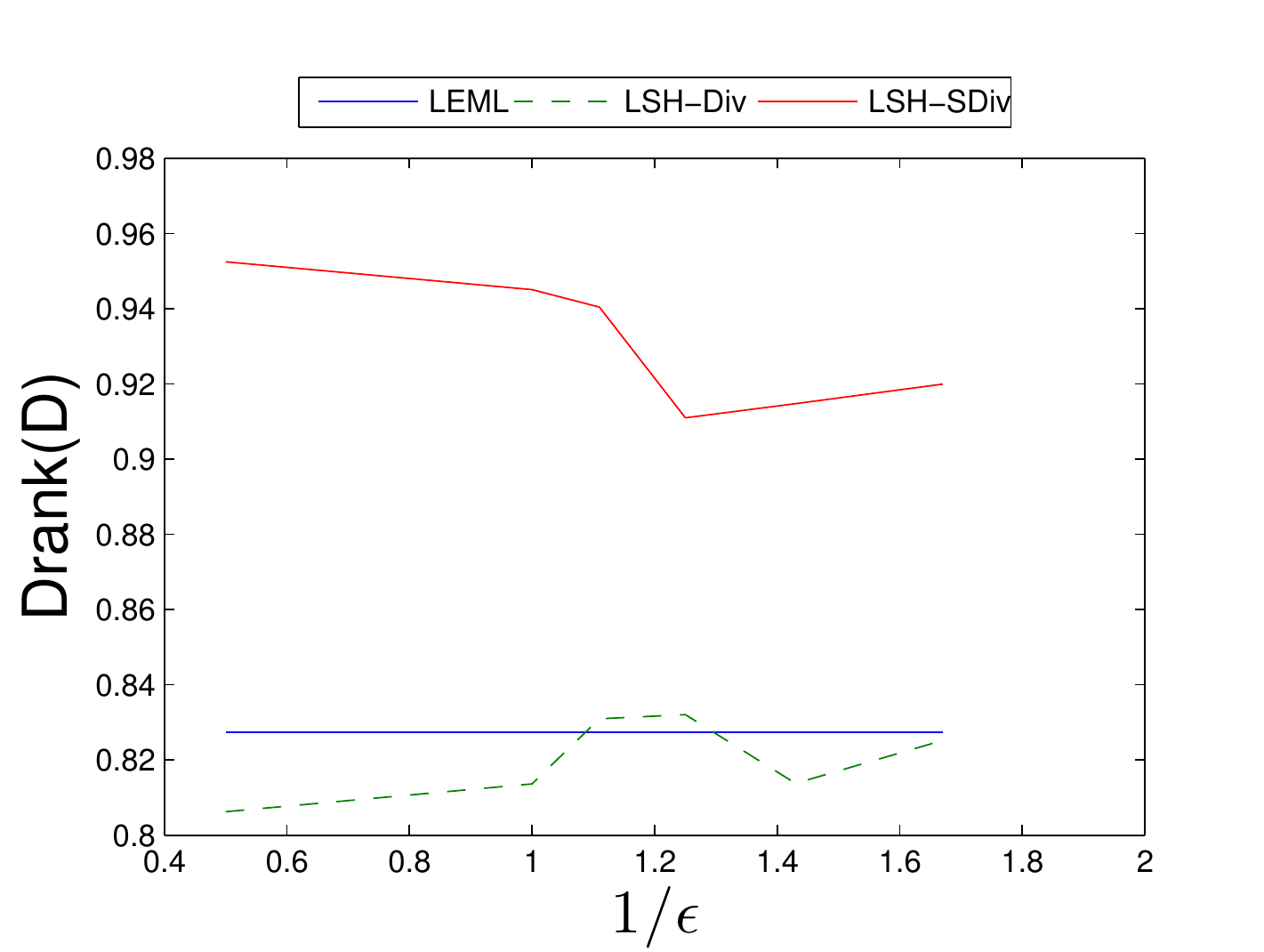}
 	 \caption{}
  \label{fig:example_lsh}
\end{subfigure}%
\begin{subfigure}{.33\textwidth}
  \centering
  \includegraphics[width=\linewidth]{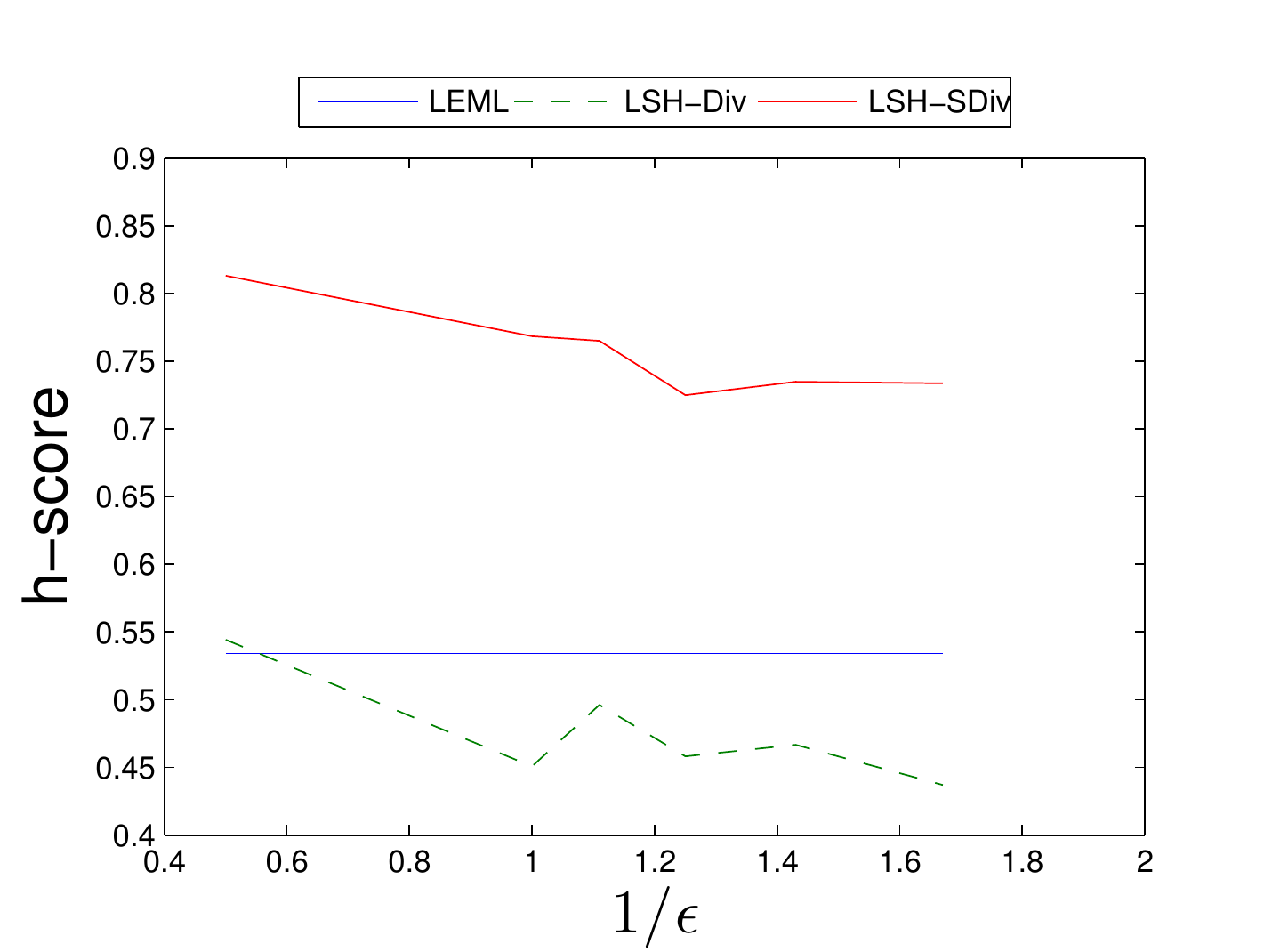}
   \caption{}
  \label{fig:example_lsh1}
\end{subfigure}
\caption{Results on LSHTC3. (a) LSH-SDiv method gives better precision than LSH-Div method. (b) LSH-SDiv method shows better diversity than LEML and LSH-Div methods. (c) LSH-SDiv method performs significantly better than the LEML and LSH-Div methods in terms of h-score. ({\em Image best viewed in color.})}
\vspace{-0.5cm}
\label{fig:lshVSvd_plot}
\end{figure*}


\subsection{Discussions}
\label{sec:discussion}

In this section, we focus on showing the trade-offs between accuracy, diversity and run-time. Figure~\ref{fig:lshVSvd_plot} illustrates the performance on LSHTC3 dataset with respect to the parameter $\epsilon$. In the figure we show the performance obtained when 100 random projections are selected for the LSH-Div method. For the LSH-SDiv method we project the 100 random projections onto the top-200 singular vectors obtained from the data. Notice that the conventional LSH hash function considers only random projections and fails to five good accuracy. As discussed in Section~\ref{sec:lsh_div}, a large number of random projections are needed to retrieve accurate labels, which would slow down the retrieval procedure.
 
In contrast, the LSH-SDiv method can successfully preserve the distances i.e., report accurate labels by projecting onto a set of $\beta$ principal components if the data is embedded in $\beta$ dimensions only. Similarly, if the $\beta+1$-th singular value of the data matrix is $\sigma_{\beta+1}$ then the distances are preserved upto that error and has no dependence on say $\epsilon$ that is required by standard LSH hash function. Hence, LSH-SDiv based technique typically requires much smaller number of hash functions than the standard LSH method and hence, is much faster as well (see Table~\ref{table:imagenet_results} and Table~\ref{table:lshtc_results}). 

Our empirical evidences from the above experiments confirm that we have a high precision for the image category retrieval scenario, a low precision for the multi-label prediction scenario. We demonstrated that the proposed algorithm is effective and robust, since, it improves diversity even when retrieving relevant results is difficult. Moreover, our algorithm can adopt to the data distribution while still retrieving accurate and diverse results. Our approach comes with an additional advantage of being more efficient computationally, which is crucial for large datasets.

\begin{figure*}
  \centering
  \includegraphics[width=\linewidth]{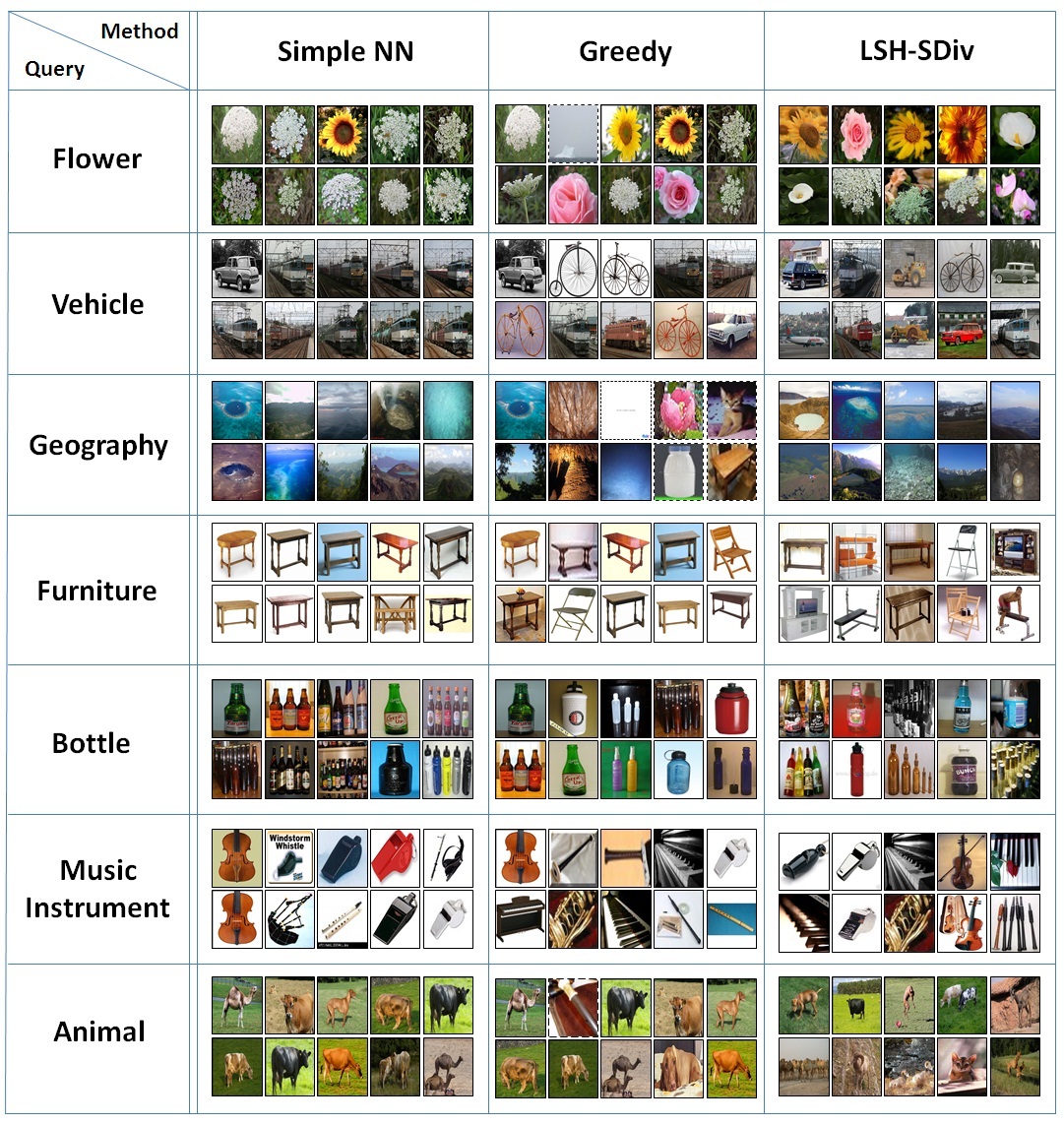}
\caption{In the plot, we show qualitative results for seven example queries from the ImageNet database. Top-10 retrieved images are shown for three methods: the first column with the simple NN method, the second coloum with Greedy MMR method, and the third coloumn with the proposed LSH-SDiv method. The images marked with dotted box are the incorrectly retrieved images with respect to the query. Notice that the greedy method fails to retrieve accurate retrieval for some of the queries. Our method, consistently retrieves relevant images and simultaneouly shows better diversity. ({\em Image best viewed in color.})}
\label{fig:imagenet_qresults}
\end{figure*}


\section{Conclusions}
\label{sec:conc}
In this paper, we present an approach to efficiently retrieve diverse results based on {\em randomized} locality sensitive hashing. We argue that standard hash functions retrieve points that are sampled uniformly at random in all directions and hence ensure diversity by default. We show that, for two applications (image and text), our proposed methods retrieve significantly more diverse and accurate data points, when compared to the existing methods. The results obtained by our approach are appealing: a good balance between accuracy and diversity is obtained by using only a small number of hash functions. We obtained $100x$-speed-up over existing diverse retrieval methods while ensuring high diversity in retrieval. The proposed solution is an highly efficient with theoretical guarantees for the sub-linear retrieval time and therefore, the algorithms are interesting and should make more useful and attractive for all practical purposes.

\section{Future Work}

We believe that other approximate nearest neighbor retrieval algorithms like Randomized KD-Trees also encourage diversity in the retrieval. In our case, the rigorous theory of locality senstive hashing functions naturally supports its performance in relevance, diversity and retrieval time. Note that the random hash functions designed in our methods are only geared to maintain spread among points with very high probability. While doing so, the algorithm has no way of knowing which solutions are diverse and which are not diverse. Therefore, for these methods, the task of providing any guarantees of the true solution to the diverse retrieval problem is challenging. With this respect, it would be interesting to examine the existence of approximation guarantees to the optimal solution.



\bibliographystyle{abbrv}
\bibliography{sigproc} 

\begin{thebibliography}{10}

\bibitem{agrawal2013multi}
R.~Agrawal, A.~Gupta, Y.~Prabhu, and M.~Varma.
\newblock Multi-label learning with millions of labels: Recommending advertiser
  bid phrases for web pages.
\newblock In {\em WWW}, 2013.

\bibitem{andoni2006near}
A.~Andoni and P.~Indyk.
\newblock Near-optimal hashing algorithms for approximate nearest neighbor in
  high dimensions.
\newblock In {\em FOCS}, 2006.

\bibitem{basri2011approximate}
R.~Basri, T.~Hassner, and L.~Zelnik-Manor.
\newblock Approximate nearest subspace search.
\newblock {\em TPAMI}, 2011.

\bibitem{carbonell1998use}
J.~Carbonell and J.~Goldstein.
\newblock The use of {MMR}, diversity-based reranking for reordering documents
  and producing summaries.
\newblock In {\em SIGIR}, 1998.

\bibitem{carneiro2007supervised}
G.~Carneiro, A.~B. Chan, P.~J. Moreno, and N.~Vasconcelos.
\newblock Supervised learning of semantic classes for image annotation and
  retrieval.
\newblock {\em TPAMI}, 2007.

\bibitem{cayton2008learning}
L.~Cayton and S.~Dasgupta.
\newblock A learning framework for nearest neighbor search.
\newblock In {\em NIPS}, 2008.

\bibitem{charikar2002similarity}
M.~S. Charikar.
\newblock Similarity estimation techniques from rounding algorithms.
\newblock In {\em STOC}, 2002.

\bibitem{dagli2006utilizing}
C.~K. Dagli, S.~Rajaram, and T.~S. Huang.
\newblock Utilizing information theoretic diversity for svm active learn.
\newblock In {\em ICPR}, 2006.

\bibitem{deng2009imagenet}
J.~Deng, W.~Dong, R.~Socher, L.-J. Li, K.~Li, and L.~Fei-Fei.
\newblock Imagenet: A large-scale hierarchical image database.
\newblock In {\em CVPR}, 2009.

\bibitem{deselaers2009jointly}
T.~Deselaers, T.~Gass, P.~Dreuw, and H.~Ney.
\newblock Jointly optimising relevance and diversity in image retrieval.
\newblock In {\em CIVR}, 2009.

\bibitem{fan2008liblinear}
R.-E. Fan, K.-W. Chang, C.-J. Hsieh, X.-R. Wang, and C.-J. Lin.
\newblock Liblinear: A library for large linear classification.
\newblock {\em JMLR}, 2008.

\bibitem{ference2013spatial}
G.~Ference, W.-C. Lee, H.-J. Jung, and D.-N. Yang.
\newblock Spatial search for k diverse-near neighbors.
\newblock In {\em CIKM}, 2013.

\bibitem{gillenwater2012near}
J.~Gillenwater, A.~Kulesza, and B.~Taskar.
\newblock Near-optimal map inference for determinantal point processes.
\newblock In {\em NIPS}, 2012.

\bibitem{lsh}
A.~Gionis, P.~Indyk, and R.~Motwani.
\newblock Similarity {S}earch in {H}igh {D}imensions via {H}ashing.
\newblock In {\em VLDB}, 1999.

\bibitem{he2012gender}
J.~He, H.~Tong, Q.~Mei, and B.~Szymanski.
\newblock Gender: A generic diversified ranking algorithm.
\newblock In {\em NIPS}, 2012.

\bibitem{indyk1998approximate}
P.~Indyk and R.~Motwani.
\newblock Approximate nearest neighbors: towards removing the curse of
  dimensionality.
\newblock In {\em STOC}, 1998.

\bibitem{jain2010hashing}
P.~Jain, S.~Vijayanarasimhan, and K.~Gorauman.
\newblock Hashing hyperplane queries to near points with applications to
  large-scale active learning.
\newblock In {\em NIPS}, 2010.

\bibitem{jancsary2013learning}
J.~Jancsary, S.~Nowozin, and C.~Rother.
\newblock Learning convex qp relaxations for structured prediction.
\newblock In {\em ICML}, 2013.

\bibitem{john1984}
W.~B. Johnson and J.~Lindenstrauss.
\newblock Extensions of lipschitz mappings into a hilbert space.
\newblock {\em Contemporary mathematics}, 1984.

\bibitem{khan2013efficient}
H.~A. Khan, M.~Drosou, and M.~A. Sharaf.
\newblock Dos: an efficient scheme for the diversification of multiple search
  results.
\newblock In {\em SSDBM}, 2013.

\bibitem{kucuktunc2013lambda}
O.~Kucuktunc and H.~Ferhatosmanoglu.
\newblock $\lambda$-diverse nearest neighbors browsing for multidimensional
  data.
\newblock {\em TKDE}, 2013.

\bibitem{kulesza2011k}
A.~Kulesza and B.~Taskar.
\newblock k-dpps: Fixed-size determinantal point processes.
\newblock In {\em ICML}, 2011.

\bibitem{kulis2009learning}
B.~Kulis and T.~Darrell.
\newblock Learning to hash with binary reconstructive embeddings.
\newblock In {\em NIPS}, 2009.

\bibitem{lin2011class}
H.~Lin and J.~Bilmes.
\newblock A class of submodular functions for document summarization.
\newblock In {\em Human Language Technologies}. ACL, 2011.

\bibitem{lugosi2004concentration}
G.~Lugosi.
\newblock Concentration-of-measure inequalities.
\newblock 2004.

\bibitem{mazur2010cultural}
B.~Mazur.
\newblock Cultural diversity in organisational theory and practice.
\newblock {\em Journal of Intercultural Management}, 2010.

\bibitem{SwamyRS14}
P.~K.~R. Mittapally Kumara~Swamy and S.~Srivastava.
\newblock Extracting diverse patterns with unbalanced concept hierarchy.
\newblock In {\em PAKDD}, 2014.

\bibitem{ntoutsi2014strength}
E.~Ntoutsi, K.~Stefanidis, K.~Rausch, and H.-P. Kriegel.
\newblock Strength lies in differences: Diversifying friends for
  recommendations through subspace clustering.
\newblock In {\em CIKM}, 2014.

\bibitem{fxml}
Y.~Prabhu and M.~Varma.
\newblock Fastxml: a fast, accurate and stable tree-classifier for extreme
  multi-label learning.
\newblock In {\em KDD}, 2014.

\bibitem{rastegari2011scalable}
M.~Rastegari, C.~Fang, and L.~Torresani.
\newblock Scalable object-class retrieval with approximate and top-k ranking.
\newblock In {\em ICCV}, 2011.

\bibitem{ravikumar2006quadratic}
P.~Ravikumar and J.~Lafferty.
\newblock Quadratic programming relaxations for metric labeling and markov
  random field map estimation.
\newblock In {\em ICML}, 2006.

\bibitem{sanner2011diverse}
S.~Sanner, S.~Guo, T.~Graepel, S.~Kharazmi, and S.~Karimi.
\newblock Diverse retrieval via greedy optimization of expected 1-call@ k in a
  latent subtopic relevance model.
\newblock In {\em CIKM}, 2011.

\bibitem{shalev2011pegasos}
S.~Shalev-Shwartz, Y.~Singer, N.~Srebro, and A.~Cotter.
\newblock Pegasos: Primal estimated sub-gradient solver for svm.
\newblock {\em Mathematical programming}, 2011.

\bibitem{wang2010sequential}
J.~Wang, S.~Kumar, and S.-F. Chang.
\newblock Sequential projection learning for hashing with compact codes.
\newblock In {\em ICML}, 2010.

\bibitem{wang2010statistical}
J.~Wang and J.~Zhu.
\newblock On statistical analysis and optimization of information retrieval
  effectiveness metrics.
\newblock In {\em SIGIR}, 2010.

\bibitem{wang2006annosearch}
X.-J. Wang, L.~Zhang, F.~Jing, and W.-Y. Ma.
\newblock Annosearch: Image auto-annotation by search.
\newblock In {\em CVPR}, 2006.

\bibitem{weiss2009spectral}
Y.~Weiss, A.~Torralba, and R.~Fergus.
\newblock Spectral hashing.
\newblock In {\em NIPS}, 2009.

\bibitem{weston2010large}
J.~Weston, S.~Bengio, and N.~Usunier.
\newblock Large scale image annotation: learning to rank with joint word-image
  embeddings.
\newblock {\em JMLR}, 2010.

\bibitem{WestonBU11}
J.~Weston, S.~Bengio, and N.~Usunier.
\newblock {WSABIE:} scaling up to large vocabulary image annotation.
\newblock In {\em IJCAI}, 2011.

\bibitem{wolsey1998integer}
L.~A. Wolsey.
\newblock {\em Integer programming}.
\newblock 1998.

\bibitem{yu2011exact}
H.~Yu, I.~Ko, Y.~Kim, S.~Hwang, and W.-S. Han.
\newblock Exact indexing for support vector machines.
\newblock In {\em SIGMOD}, 2011.

\bibitem{yu2014large}
H.-F. Yu, P.~Jain, P.~Kar, and I.~Dhillon.
\newblock Large-scale multi-label learning with missing labels.
\newblock In {\em ICML}, 2014.

\bibitem{yue2008predicting}
Y.~Yue and T.~Joachims.
\newblock Predicting diverse subsets using structural svms.
\newblock In {\em ICML}, 2008.

\bibitem{zhai2003beyond}
C.~X. Zhai, W.~W. Cohen, and J.~Lafferty.
\newblock Beyond independent relevance: methods and evaluation metrics for
  subtopic retrieval.
\newblock In {\em SIGIR}, 2003.

\end{thebibliography}

\balancecolumns

\end{document}